\theoremstyle{definition}
\newtheorem{definition}{Definition}[section]
\newtheorem{example}[definition]{Example}
\theoremstyle{plain}
\newtheorem{proposition}[definition]{Proposition}
\newtheorem{lemma}[definition]{Lemma}
\numberwithin{equation}{section}
\def\fullref#1#2{%
  \ifdefined\hyperref%
    {\hyperref[#2]{#1\space\penalty 200\relax\ref*{#2}}}%
  \else%
    {#1\space\penalty 200\relax\ref{#2}}%
  \fi%
}
\newcommand{\defterm}[1]{\textit{#1}}
\newcommand{\pres}[2]{\left\langle #1\:|\:#2 \right\rangle}
\newcommand{\nset}{\mathbb{N}}
\newcommand{\emptyword}{\varepsilon}
\newcommand{\rel}[1]{\mathcal{#1}}
\newcommand{\thue}{\leftrightarrow^*}
\newcommand{\imthue}{\leftrightarrow}
\newcommand{\imred}{\rightarrow}
\newcommand{\red}{\rightarrow^*}
\begin{document}

\title{Finitely presented monoids with linear Dehn function need not have regular cross-sections}
\author{Alan J. Cain \& Victor Maltcev}
\date{}

\thanks{During the research that led to the this paper, the
  first author was supported by the European
  Regional Development Fund through the programme {\sc COMPETE} and by
  the Portuguese Government through the {\sc FCT} (Funda\c{c}\~{a}o
  para a Ci\^{e}ncia e a Tecnologia) under the project
  {\sc PEst-C}/{\sc MAT}/{\sc UI}0144/2011 and through an {\sc FCT} Ci\^{e}ncia 2008
  fellowship.}

\maketitle

\address[AJC]{%
Centro de Matem\'{a}tica, Universidade do Porto, \\
Rua do Campo Alegre 687, 4169--007 Porto, Portugal
}
\email{%
ajcain@fc.up.pt
}
\webpage{%
www.fc.up.pt/pessoas/ajcain/
}
\address[VM]{Rudka 24, Khvoshchove, Shyshaky Subregion, Poltava Region, Ukraine}
\email{%
victor.maltcev@gmail.com
}

\begin{abstract}
This paper shows that a finitely presented monoid with linear Dehn
function need not have a regular cross-section, strengthening the
previously-known result that such a monoid need not be presented by a
finite complete string rewriting system, and contrasting the fact that
finitely presented groups with linear Dehn function always have
regular cross-sections.

\keywords{String rewriting system, linear Dehn function, regular,
  cross-section, unique normal forms}
\end{abstract}

\section{Introduction}

The use of isoperimetric and Dehn functions in group theory stems from
the seminal paper of Gromov~\cite{gromov_hyperbolic} and its
characterization of word-hyperbolic groups as groups having linear Dehn
function. Another characterization of word-hyperbolic groups is admitting a
Dehn presentation (equivalently, a presentation via a finite complete
length-reducing rewriting system). Since the language of irreducible
words of such a presentations is regular, this shows that any group
with linear Dehn function has a regular cross-section.

Indeed, the word problem for both groups and monoids is closely tied
to Dehn functions and cross-sections. It is well-known that a finitely
presented monoid (possibly a group) has soluble word problem if and
only if it had recursive Dehn function if and only if it has a
recursive cross-section. Squier \cite{squier_wordproblem} gave
examples of monoids with soluble word problem that cannot be presented
by finite complete rewriting systems, but these monoids still have
regular cross-sections. Kobayashi \cite[\S~4]{kobayashi_noregfcrs}
gave an example of a finitely presented monoid with soluble word
problem that does not admit a regular complete presentation; Otto,
Katsura \& Kobayashi \cite[Example~6.4]{otto_infiniteconvergent}
proved the stronger result that this monoid does not have a regular
cross-section. However, it is easy to see that this example has
quadratic Dehn function. Otto, Sattler-Klein \& Madlener
\cite[\S~3]{otto_automonversus} gave an example of a linear Dehn
function monoid that cannot be presented by a finite complete
rewriting system. However, this example still has a regular
cross-section. In the present paper, \fullref{Example}{ex:ldf}
exhibits a linear Dehn function monoid that does not have a regular
cross-section, thus filling in the last line in
\fullref{Table}{tbl:summary}.

\begin{table}[t]
\caption{Summary of known examples relating a monoid's Dehn function
  and whether it has a presentation via a finite complete rewriting
  system or a regular cross-section. Notice that if a monoid does not
  have a regular cross-section, it cannot have a presentation via a
  finite complete rewriting system.}
\label{tbl:summary}
\begin{center}
\begin{tabular}{cccc}
\toprule
& Dehn function & FCRS presentation & Regular cross-section \\
\midrule
\cite{squier_wordproblem} & Recursive & No & Yes \\
\cite{otto_infiniteconvergent} & Quadratic & No & No \\
\cite{otto_automonversus} & Linear & No & Yes \\
\fullref{Example}{ex:ldf} & Linear & No & No \\ 
\bottomrule
\end{tabular}
\end{center}
\end{table}

But the real importance of this example comes from the r\^{o}le it
plays in the possible generalization to monoids of word-hyperbolicity,
using the various equivalent characterizations of word-hyperbolic
groups. For instance, one may consider finitely generated monoids
whose Cayley graphs of which form hyperbolic spaces (see, for example,
\cite{c_rshcg,cassaigne_infinitewords}); or one could monoids that
satisfy Gilman's linguistic characterization of word-hyperbolicity for
groups \cite{gilman_hyperbolic} (see
\cite{cm_wordhypunique,duncan_hyperbolic}). Another possibility would
be to consider monoids having linear Dehn function. However, the
example in this paper shows that if one attempts to generalize using
linear Dehn functions, one must abandon one of the fundamental
properties of word-hyperbolic groups, viz.,~having a regular
cross-section.

[The research described in this paper has been included in the second
  author's Ph.D.~thesis \cite[\S~7.2]{maltcev_phd}.]

\section{Preliminaries}

We briefly recall the necessary definitions and terminology; see
\cite{book_srs} for further background information on for
string-rewriting systems.

A \defterm{string rewriting system}, or simply a \defterm{rewriting
  system}, is a pair $(A,\rel{R})$, where $A$ is a finite alphabet and
$\rel{R}$ is a set of pairs $(l,r)$, usually written $l \imred r$,
known as \defterm{rewriting rules} or simply \defterm{rules}, drawn
from $A^* \times A^*$. The single reduction relation
$\imred_{\rel{R}}$ is defined as follows: $u \imred_{\rel{R}} v$
(where $u,v \in A^*$) if there exists a rewriting rule $(l,r) \in
\rel{R}$ and words $x,y \in A^*$ such that $u = xly$ and $v =
xry$. That is, $u \imred_{\rel{R}} v$ if one can obtain $v$ from $u$
by substituting the word $r$ for a subword $l$ of $u$, where $l \imred r$
is a rewriting rule. The reduction relation $\red_{\rel{R}}$ is the
reflexive and transitive closure of $\imred_{\rel{R}}$. The process of
replacing a subword $l$ by a word $r$, where $l \imred r$ is a rule,
is called \defterm{reduction} by application of the rule $l \imred r$;
the iteration of this process is also called reduction. A word $w \in
A^*$ is \defterm{reducible} if it contains a subword $l$ that forms
the left-hand side of a rewriting rule in $\rel{R}$; it is otherwise
called \defterm{irreducible}.

The rewriting system $(A,\rel{R})$ is \defterm{finite} if both $A$ and
$\rel{R}$ are finite. The rewriting system $(A,\rel{R})$ is
\defterm{noetherian} if there is no infinite sequence $u_1,u_2,\ldots
\in A^*$ such that $u_i \imred_{\rel{R}} u_{i+1}$ for all $i \in
\nset$. That is, $(A,\rel{R})$ is noetherian if any process of
reduction must eventually terminate with an irreducible word. The
rewriting system $(A,\rel{R})$ is \defterm{confluent} if, for any
words $u, u',u'' \in A^*$ with $u \red_{\rel{R}} u'$ and $u
\red_{\rel{R}} u''$, there exists a word $v \in A^*$ such that $u'
\red_{\rel{R}} v$ and $u'' \red_{\rel{R}} v$. The rewriting system
$(A,\rel{R})$ is \defterm{locally confluent} if, for any words $u,
u',u'' \in A^*$ with $u \imred_{\rel{R}} u'$ and $u \imred_{\rel{R}}
u''$, there exists a word $v \in A^*$ such that $u' \red_{\rel{R}} v$
and $u'' \red_{\rel{R}} v$. A \defterm{critical pair} occurs when of
left-hand sides of two (not necessarily distinct) rewriting rules can
overlap: two rules $l_1 \imred r_1)$ and $l_2 \imred r_2$, such that
there is either (1) $l_1 = xy$ and $l_2 = yz$, so that $xyz
\imred_{\rel{R}} r_1z$ and $xyz \imred_{\rel{R}} xr_2$, or (2) $l_1 =
xl_2y$, so that $l_1 \imred_{\rel{R}} r_1$ and $l_1 = xl_2y
\imred_{\rel{R}} xr_2y$. If, in case (1), there is a word $w$ such
that $r_1z \red_{\rel{R}} w$ and $xr_2 \red_{\rel{R}}$, or, in case
(2), there is a word $w$ such that $r_1 \red_{\rel{R}} w$ and $xr_2y
\red_{\rel{R}} w$, then the critical pair is said to
\defterm{resolve}. A rewriting system is locally confluent if and only
if every critical pair resolves. A noetherian rewriting system is
confluent if and only if it is locally confluent. A rewriting system
that is both confluent and noetherian is \defterm{complete}.

The \defterm{Thue congruence} $\thue_{\rel{R}}$ is the equivalence
relation generated by $\imred_{\rel{R}}$. The elements of the monoid
presented by $\pres{A}{\rel{R}}$ are the $\thue_{\rel{R}}$-equivalence
classes. If $(A,\rel{R})$ is complete, every
$\thue_{\rel{R}}$-equivalence class contains a unique irreducible
word, and any word in this class must reduce to that unique
irreducible word. Two rewriting systems $(A,\rel{R})$ and
$(A,\rel{S})$ are \defterm{equivalent} if $\thue_{\rel{R}}$ and
$\thue_{\rel{S}}$ coincide, in which case the monoids presented by
$\pres{A}{\rel{R}}$ and $\pres{A}{\rel{S}}$ are isomorphic.

A \defterm{cross-section} for the monoid presented by
$\pres{A}{\rel{R}}$ is a language $L$ over $A$ containing exactly one
word in each $\thue_{\rel{R}}$-equivalence class. The set of
irreducible words of a complete rewriting system $(A,\rel{R})$ forms a
cross-section for the monoid presented by $\pres{A}{\rel{R}}$. If
$(A,\rel{R})$ is also finite, the set of irreducible words,
\[
A^* - A^*\{l : (l \imred r) \in \rel{R}\}A^*,
\]
is a regular cross-section of the monoid.

\begin{proposition}[{\cite[Proposition~5.3]{cm_markov}}]
\label{prop:changegen}
Let $M$ be a monoid that has a regular cross-section over some
finite generating set. Then for every finite generating set $A$ for
$M$, there is a regular cross-section of $M$ over $A$.
\end{proposition}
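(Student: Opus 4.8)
The plan is to transport the given cross-section along a homomorphism of free monoids that respects the element of $M$ being represented. Suppose $B$ is a finite generating set over which $M$ has a regular cross-section $K \subseteq B^*$, and let $A$ be an arbitrary finite generating set for $M$. Since $A$ generates $M$, for each letter $b \in B$ one may fix a word $\alpha_b \in A^*$ representing the same element of $M$ as $b$; extending $b \mapsto \alpha_b$ yields a monoid homomorphism $\phi \colon B^* \to A^*$. Composing $\phi$ with the natural surjection $A^* \to M$ gives a homomorphism $B^* \to M$ that agrees on the generators $B$ with the natural surjection $B^* \to M$, so the two coincide; hence $\phi(w)$ and $w$ represent the same element of $M$ for every $w \in B^*$. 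I would then take $L := \phi(K) \subseteq A^*$ as the candidate regular cross-section over $A$.

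Two things then have to be checked. First, $L$ is regular: the image of a regular language under a homomorphism of free monoids is regular (concretely, take a finite automaton recognising $K$ and replace every transition labelled $b \in B$ by a path spelling out $\alpha_b$, inserting an $\emptyword$-transition when $\alpha_b = \emptyword$). Second, $L$ contains exactly one word representing each element of $M$, that is, the map from $L$ to $M$ sending a word to the element it represents is a bijection. It is surjective since, given $m \in M$, the cross-section $K$ contains a word $k$ representing $m$, and then $\phi(k) \in L$ also represents $m$. It is injective since, if $\ell_1,\ell_2 \in L$ represent the same element of $M$, writing $\ell_i = \phi(k_i)$ with $k_i \in K$ we have that $k_i$ represents the same element as $\phi(k_i)=\ell_i$; thus $k_1$ and $k_2$ represent the same element, so $k_1 = k_2$ because $K$ is a cross-section, and therefore $\ell_1 = \ell_2$.

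I do not anticipate a real obstacle: the argument is short once the right homomorphism is set up. The only points requiring a little care are (i) invoking closure of the class of regular languages under homomorphic images (rather than the stronger and here-unneeded closure under rational functions), and (ii) making explicit that $\phi$ preserves the represented element of $M$, since that is precisely what forces the induced map on cross-sections to be injective. An essentially equivalent route avoids mentioning homomorphic images and instead reinterprets a finite automaton for $K$ over the alphabet $A$ via the substitution $b \mapsto \alpha_b$; I would present whichever version is shorter, but the homomorphism formulation seems cleanest.
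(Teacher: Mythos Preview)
Your argument is correct and is the standard one. Note, however, that the paper does not actually prove this proposition: it is quoted from \cite[Proposition~5.3]{cm_markov} and stated without proof, so there is no ``paper's own proof'' to compare against. Your homomorphism-transport argument --- fix a word $\alpha_b \in A^*$ for each $b \in B$, push the cross-section $K$ forward along the induced free-monoid homomorphism $\phi$, and observe that regular languages are closed under homomorphic image --- is exactly the expected proof and would be an appropriate justification had the paper chosen to include one.
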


Let $(A,\rel{R})$ be a finite rewriting system and let $M$ be the
monoid presented by $\pres{A}{\rel{R}}$. The \defterm{Dehn function}
of this presentation, $D_{M;A,\rel{R}} : \nset \to \nset$, is defined
as follows. For two words $u,v\in A^*$ such that $u \thue_{\rel{R}}
v$, let $d_{\rel{R}}(u,v)$ be the least number of relations from $\rel{R}$
that can be applied to obtain $v$ from $u$. Then
\[
\mathrm{D}_{M;A,\rel{R}}(n)=\max\Bigl\{d_{\rel{R}}(u,v) : u,v\in A^*,
|u|+|v|\leq n, u \thue_{\rel{R}} v\Bigr\}.
\]
It is easy to see that the growth rate (in terms of $n$) of the Dehn
function depends only on $M$, not on the choice of the presentation
$\pres{A}{\rel{R}}$. Thus this growth rate is an invariant of $M$. The
monoid $M$ has \defterm{linear Dehn function} if
$\mathrm{D}_{M;A,\rel{R}}(n)$ grows linearly.

Note in passing that $d_{\rel{R}}(puq,pvq) \leq d_{\rel{R}}(u,v)$ for
all $u,v,p,q \in A^*$ with $u \thue_{\rel{R}} v$.

\section{The example}

This paper is centred on the study of the following monoid:

\begin{example}
\label{ex:ldf}
Let $A = \{a,b,c,0\}$ and let $\rel{R}$ consist of the following
rewriting rules:
\begin{align}
ba &\imred a^2b, \tag{BA}\label{eq:ba} \\
bc &\imred aca, \tag{BC}\label{eq:bc} \\
ac^2 &\imred 0, \tag{ACC}\label{eq:acc} \\
x0 &\imred 0,\quad 0x \imred 0 \qquad \tag{Z}\text{for all $x \in A$}. \label{eq:z}
\end{align}
Notice that $(A,\rel{R})$ is a finite rewriting system.

Let $M$ be the monoid presented by $\pres{A}{\rel{R}}$. By
\fullref{Proposition}{prop:ldf} and~\ref{prop:noregcs} below, $M$ has
linear Dehn function and does not have a regular cross-section. In
particular, $M$ cannot be presented by a finite complete rewriting
system.
\end{example}

[The rewriting system $(A,\rel{R})$ has a \textit{prima facie}
  resemblance to the example proven by Otto, Katsura \&
  Kobayashi \cite[Example~6.4]{otto_infiniteconvergent} not to have a
  regular cross-section; however, as observed in the introduction,
  this example has quadratic Dehn function.]

We begin with some preliminary results before proceeding to show that
$M$ has linear Dehn function and does not have a regular
cross-section.

Let $\rel{S}$ consist of the rewriting rules \eqref{eq:ba}, \eqref{eq:bc},
and \eqref{eq:z}, and also
\begin{align}
a^{2^{n+1}-1}ca^nc &\imred 0 \qquad \text{for all $n \geq 0$}, \tag{ACAC}\label{eq:acac} 
\end{align}
Notice that the rewriting system $(A,\rel{S})$ is infinite, and
further that \eqref{eq:acc} is simply \eqref{eq:acac} with $n = 0$.

\begin{lemma}
\label{lem:equivalent}
The rewriting systems $(A,\rel{R})$ and $(A,\rel{S})$ are equivalent
(that is, $\thue_{\rel{R}}$ and $\thue_{\rel{S}}$ coincide) and hence
$\pres{A}{\rel{S}}$ also presents the monoid $M$.
\end{lemma}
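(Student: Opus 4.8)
The plan is to show the two Thue congruences coincide by showing that each rule of one system is a consequence of the other. Since $\rel{R}$ and $\rel{S}$ share the rules \eqref{eq:ba}, \eqref{eq:bc}, and \eqref{eq:z}, it suffices to show that (i) each rule \eqref{eq:acac} is a consequence of $\thue_{\rel{R}}$, and (ii) the single rule \eqref{eq:acc} is a consequence of $\thue_{\rel{S}}$. Direction (ii) is immediate, since \eqref{eq:acc} is exactly the instance $n=0$ of \eqref{eq:acac}, so $\thue_{\rel{S}}$ already contains $ac^2 \thue_{\rel{S}} 0$ and hence $\thue_{\rel{R}} \subseteq \thue_{\rel{S}}$. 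The substance is all in direction (i).

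For direction (i), I would prove by induction on $n$ that $a^{2^{n+1}-1}ca^nc \thue_{\rel{R}} 0$. The base case $n=0$ is the word $ac^2$, which reduces to $0$ by \eqref{eq:acc} directly. For the inductive step, the key computation is to use \eqref{eq:bc} in reverse: since $aca \thue_{\rel{R}} bc$, one can rewrite the middle $ca^{n}c$ block. More precisely, I expect the crucial identity to be something like $a^{k}ca^{n}c \thue_{\rel{R}} a^{k}\cdot(a^{-1}\cdot aca)\cdot a^{n-1}c \thue_{\rel{R}} a^{k-1}(bc)a^{n-1}c$, and then to push the $b$ rightward past the $a$'s using \eqref{eq:ba}, which doubles the exponent each time a $b$ crosses an $a$: $ba \thue_{\rel{R}} a^2 b$ gives $ba^{m} \thue_{\rel{R}} a^{2m}b$. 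Carefully tracking how the exponent $2^{n+1}-1$ transforms — writing $2^{n+1}-1 = 2(2^{n}-1)+1$ to line up with the doubling — should show that $a^{2^{n+1}-1}ca^{n}c$ is congruent to $a^{2^{n}-1}ca^{n-1}c$ (possibly after also using \eqref{eq:bc} or \eqref{eq:acc} to clean up a trailing $c$), which is the word handled by the inductive hypothesis and hence congruent to $0$.

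The main obstacle I anticipate is getting the bookkeeping of the powers of $a$ exactly right: one must verify that the $b$ produced by reversing \eqref{eq:bc}, once shuttled through the appropriate number of $a$'s via \eqref{eq:ba} (in both directions, since $\thue$ is symmetric), leaves precisely the exponent $2^{n+1}-1$ demanded by \eqref{eq:acac} and no off-by-one discrepancy. It will be cleanest to first record the auxiliary congruences $ba^{m} \thue_{\rel{R}} a^{2m}b$ and $b^{j}a \thue_{\rel{R}} a^{2^{j}}b^{j}$ (and their $\rel{S}$-analogues, which are identical since those rules are shared), and then to do the exponent arithmetic symbolically once. Since $\thue_{\rel{R}}$ is a congruence, all these local rewrites can be performed inside arbitrary context, so no extra care about surrounding letters is needed. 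Once both inclusions $\thue_{\rel{R}} \subseteq \thue_{\rel{S}}$ and $\thue_{\rel{S}} \subseteq \thue_{\rel{R}}$ are established, the two congruences coincide and the presentations define the same monoid $M$.
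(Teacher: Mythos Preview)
Your approach is exactly the paper's: observe $\rel{R}\subseteq\rel{S}$ (since \eqref{eq:acc} is \eqref{eq:acac} at $n=0$) and then prove $a^{2^{n+1}-1}ca^{n}c \thue_{\rel{R}} 0$ by induction on $n$, using \eqref{eq:bc} in reverse to introduce a $b$ and \eqref{eq:ba} to adjust the exponent before invoking the inductive hypothesis. Two small corrections to your sketch: in the word $a^{2^{n+1}-2}\,bc\,a^{n-1}c$ the letter $b$ has no $a$ to its right, so it must be moved \emph{leftward} via the reverse of \eqref{eq:ba} (that is, $a^{2m}b \thue_{\rel{R}} ba^{m}$, halving the left exponent), and this yields $b\,a^{2^{n}-1}ca^{n-1}c$ rather than $a^{2^{n}-1}ca^{n-1}c$; consequently, after applying the inductive hypothesis you reach $b0$, and the final clean-up step is \eqref{eq:z}, not \eqref{eq:bc} or \eqref{eq:acc}.
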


\begin{proof}
Note that $\rel{R} \subseteq \rel{S}$. The only rules in $\rel{S}$
that are not in $\rel{R}$ are rules \eqref{eq:acac} for $n \geq
1$. Thus it suffices to show that
\begin{equation}
a^{2^{n+1}-1}ca^nc \thue_{\rel{R}} 0\label{eq:acacthues}
\end{equation}
for all $n \geq 1$. Proceed by induction on $n$: for $n=0$,
$a^{2^{n+1}-1}ca^nc = ac^2 \imthue_{\rel{R}} 0$. So suppose
\eqref{eq:acacthues} holds for $n = k-1$; the aim is to show
\eqref{eq:acacthues} holds for $n=k$:
\begin{align*}
& a^{2^{k+1}-1}ca^kc \\
\imthue_{\rel{R}}{}& a^{2^{k+1}-2}bca^{k-1}c && \text{(by \eqref{eq:bc})} \\
\thue_{\rel{R}}{}& ba^{2^{k}-1}ca^{k-1}c && \text{(by \eqref{eq:ba} applied $2^{k}-1$ times)} \\
\thue_{\rel{R}}{}& b0 && \text{(by induction; \eqref{eq:acacthues} with $n = k-1$)} \\
\imthue_{\rel{R}}{}& 0 && \text{(by \eqref{eq:z}).} 
\end{align*}
Hence, by induction, \eqref{eq:acacthues} holds for all $n \geq 1$.
\end{proof}

\begin{lemma}
\label{lem:complete}
The rewriting system $(A,\rel{S})$ is complete.
\end{lemma}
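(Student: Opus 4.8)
The plan is to prove that $(A,\rel{S})$ is noetherian and locally confluent; completeness then follows from the fact, recalled above, that a noetherian locally confluent system is confluent.

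For noetherianity I would use a well-founded monotone interpretation into the positive integers. Put $f_a(x)=f_c(x)=f_0(x)=x+1$ and $f_b(x)=3x$; each is a strictly increasing self-map of $\{1,2,3,\dots\}$. Extend to words by $[\emptyword]=\mathrm{id}$ and $[w_1\cdots w_m]=f_{w_1}\circ\cdots\circ f_{w_m}$, and set $\|w\|=[w](1)$. Because the $f_x$ are strictly increasing, if $[l](x)>[r](x)$ for all $x\ge 1$ whenever $l\imred r$ is a rule of $\rel{S}$, then $[plq](x)>[prq](x)$ for all $p,q\in A^*$ and all $x\ge 1$, so $\|\cdot\|$ strictly decreases along every reduction step and there can be no infinite reduction. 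It therefore remains to check the rules. For the finite rules this is a one-line computation: $[ba](x)=3x+3>3x+2=[aab](x)$; $[bc](x)=3x+3>x+3=[aca](x)$; and for $x\in A$ one has $[x0](y)>y+1=[0](y)$ and $[0x](y)>y+1=[0](y)$, because each $f_x$ strictly exceeds the identity on $\{1,2,\dots\}$. For the family \eqref{eq:acac} one computes $[a^{2^{n+1}-1}ca^nc](x)=x+2^{n+1}+n+1>x+1=[0](x)$. (The only real constraint here is that \eqref{eq:ba} forces the weight of $b$ to be genuinely superlinear, which is why $f_b$ multiplies rather than adds.)

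For local confluence, since the system is noetherian it suffices to check that every critical pair resolves, so I would work through the overlaps organised by which pair of rule families is involved. The overlaps among \eqref{eq:ba}, \eqref{eq:bc}, and \eqref{eq:z} are finite in number and each resolves almost immediately, essentially because any word containing $0$ reduces to $0$ via \eqref{eq:z}; the same remark disposes of the overlaps of an \eqref{eq:z}-rule with \eqref{eq:acac} (which occur only in the shapes $0\cdot a^{2^{n+1}-1}ca^nc$ and $a^{2^{n+1}-1}ca^nc\cdot 0$), and of the overlaps of two \eqref{eq:acac}-rules (which occur only when a suffix $a^{k}c$ of one left-hand side is a prefix of another): in all these cases both reducts run down to $0$.

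The one genuinely informative family of critical pairs comes from \eqref{eq:ba} overlapping the leading letter of the left-hand side of \eqref{eq:acac}, namely the word $b\,a^{2^{n+1}-1}ca^nc$. Applying \eqref{eq:acac} to the suffix gives $b0$, hence $0$; applying \eqref{eq:ba} gives $a^2b\,a^{2^{n+1}-2}ca^nc$, and then repeated applications of \eqref{eq:ba} push the $b$ through the remaining $a$'s to produce $a^{2^{n+2}-2}bca^nc$, after which \eqref{eq:bc} yields $a^{2^{n+2}-1}ca^{n+1}c$, which is exactly the left-hand side of \eqref{eq:acac} with parameter $n+1$ and hence reduces to $0$. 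This is precisely the computation already carried out in the proof of \fullref{Lemma}{lem:equivalent}, and it is the reason the family \eqref{eq:acac} is defined with the exponents $2^{n+1}-1$: these are exactly the words one is forced to adjoin so that this overlap resolves. I expect this family to be the only part of the argument that needs real thought; everything else is bookkeeping. Once every critical pair is seen to resolve, local confluence holds, and together with noetherianity this yields completeness.
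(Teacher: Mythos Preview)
Your proof is correct. The confluence half is essentially identical to the paper's: both of you observe that any critical pair involving $0$ (whether from \eqref{eq:z} or from two instances of \eqref{eq:acac}) resolves trivially because both reducts contain $0$, and both of you isolate the single substantive overlap $ba^{2^{n+1}-1}ca^nc$ between \eqref{eq:ba} and \eqref{eq:acac}, resolving it by pushing the $b$ rightwards and invoking \eqref{eq:acac} at parameter $n+1$.

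Where you genuinely differ is in the noetherianity argument. The paper argues in layers: rules \eqref{eq:z} and \eqref{eq:acac} shorten words, rule \eqref{eq:bc} decreases the $b$-count, and then a bespoke well-ordering on tuples of exponents shows that \eqref{eq:ba} alone cannot be applied infinitely often. Your monotone interpretation ($f_a=f_c=f_0=x\mapsto x+1$, $f_b=x\mapsto 3x$) handles all four rule families uniformly with a single well-founded measure on $\mathbb{N}$; the price is that one must spot that \eqref{eq:ba} forces $f_b$ to be genuinely multiplicative, which you note explicitly. Your approach is cleaner and more systematic (it is the standard polynomial-interpretation technique), while the paper's argument is more hands-on but requires no search for interpretation functions. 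Either is perfectly adequate here.
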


\begin{proof}
To see that the rewriting system $(A,\rel{S})$ is noetherian, notice
that rewriting rules \eqref{eq:z} and \eqref{eq:acac} reduce the
length of a word and \eqref{eq:bc} decreases the number of letters $b$
in a word. So any infinite sequence of reduction would involve
infinitely many applications of \eqref{eq:ba}. To see that this is
impossible, let $(a^{d_{k+1}}ba^{d_k}\cdots ba^{d_1})\theta =
(d_{k+1},d_k,\ldots,d_1)$ and define an ordering on $k$-tuples of
non-negative integers by:
\[
(d_{k+1},d_k,\ldots,d_1) < (e_{k+1},ba^{e_k},\ldots,e_1) \iff (\exists
h)\bigl((\forall i < h)(d_i = e_i) \land d_h < e_h\bigr).
\]
This is a well-ordering of such $k$-tuples. Since rewriting using
\eqref{eq:ba} always reduces (with respect to this order) the image
under $\theta$ of some maximal subword over $\{a,b\}^*$, it follows
that no sequence of reduction can involve infinitely many applications
of \eqref{eq:ba}.

To see that $(A,\rel{S})$ is confluent, notice first that any word
containing $0$ will always be rewritten to $0$ using rules
\eqref{eq:z}, and so any critical pairs involving any of these rules
always resolve. There are two remaining cases to consider.

First, the left-hand sides of \eqref{eq:ba} and some rule
\eqref{eq:acac} may overlap in a word $ba^{2^{k+1}-1}ca^kc$:
\[
ba^{2^{k+1}-1}ca^kc \imred_{\rel{S}} b0\quad\text{ and }\quad ba^{2^{k+1}-1}ca^kc \imred_{\rel{S}} a^2ba^{2^{k+1}-2}ca^kc.
\]
But $b0 \imred 0$ by \eqref{eq:z}, and
\begin{align*}
& a^2ba^{2^{k+1}-2}ca^kc \\
\red_{\rel{S}}{}& a^{2(2^{k+1}-1)}bca^kc && \text{(by \eqref{eq:ba} applied $2^{k+1} - 2$ times)} \\
\imred_{\rel{S}}{}& a^{2^{k+2}-1}ca^{k+1}c && \text{(by \eqref{eq:bc})} \\
\imred_{\rel{S}}{}& 0 && \text{(by \eqref{eq:acac} with $n=k+1$).}
\end{align*}
Thus such critical pairs resolve.

Second, two left-hand sides of rules \eqref{eq:acac} may overlap in a
word of the form $a^{2^{m + 2^{k+1}}-1}ca^{m+2^{k+1}-1}ca^{k}c$, where
$m \geq 0$:
\begin{align*}
a^{2^{m + 2^{k+1}}-1}ca^{m+2^{k+1}-1}ca^{k}c &\imred_{\rel{S}} 0a^kc,\\
a^{2^{m + 2^{k+1}}-1}ca^{m+2^{k+1}-1}ca^{k}c &\imred_{\rel{S}} a^{2^{m + 2^{k+1}}-1}ca^{m}0;
\end{align*}
but such a critical pair resolves because both of these words reduce to $0$.

Hence all critical pairs resolve and so $(A,\rel{S})$ is locally
confluent and hence, since it is noetherian, confluent. So
$(A,\rel{S})$ is complete.
\end{proof}

\begin{lemma}
\label{lem:distz}
For any word $w \in A^*$ containing a symbol $0$, $d_{\rel{R}}(w,0) \leq |w|$.
\end{lemma}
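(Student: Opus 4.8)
The plan is to prove this by induction on the number of letters in $w$ that are \emph{not} equal to $0$. Write $w$ as a word over $A = \{a,b,c,0\}$; since $w$ contains at least one $0$, let $k$ be the number of occurrences in $w$ of letters from $\{a,b,c\}$, so $|w| = k + (\text{number of }0\text{'s}) \geq k+1$.

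\medskip

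\noindent\textbf{Base case.} If $k = 0$, then $w = 0^m$ for some $m \geq 1$. Each application of a rule \eqref{eq:z} of the form $00 \imred 0$ removes one $0$, so $0^m \red_{\rel{R}} 0$ using $m-1$ applications, and $m - 1 < m = |w|$; in particular $d_{\rel{R}}(w,0) \leq |w|$.

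\medskip

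\noindent\textbf{Inductive step.} Suppose $k \geq 1$ and the bound holds for all words with fewer than $k$ non-$0$ letters (and at least one $0$). Since $w$ contains both a letter $x \in \{a,b,c\}$ and a letter $0$, it contains a factor that is the left-hand side of some rule \eqref{eq:z}: indeed, somewhere in $w$ a symbol from $\{a,b,c\}$ is adjacent to a $0$, giving a factor $x0$ or $0x$ with $x \in A$. Applying the corresponding rule \eqref{eq:z} once yields a word $w'$ with $w \imred_{\rel{R}} w'$, where $w'$ still contains a $0$ and has exactly $k-1$ non-$0$ letters, so $|w'| = |w| - 1$. By the induction hypothesis, $d_{\rel{R}}(w',0) \leq |w'| = |w| - 1$, and hence $d_{\rel{R}}(w,0) \leq 1 + d_{\rel{R}}(w',0) \leq |w|$, completing the induction.

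\medskip

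I do not anticipate a genuine obstacle here: the only slightly delicate point is verifying that a word containing both a non-$0$ letter and a $0$ must contain one of them adjacent to the other, which is immediate from considering the first position where the two ``types'' of letter meet. The rules \eqref{eq:z} are exactly designed so that every symbol is absorbed by an adjacent $0$, so the counting argument goes through cleanly, and the bound $d_{\rel{R}}(w,0) \leq |w|$ is not even tight (one could get $|w|-1$), but $|w|$ suffices for later use.
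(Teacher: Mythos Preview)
Your proof is correct and follows essentially the same approach as the paper: both arguments observe that repeated application of the rules \eqref{eq:z} absorbs all letters of $w$ into a single $0$, using at most $|w|$ steps. The paper compresses this into a single sentence, while you spell out the induction and the adjacency observation explicitly; there is no substantive difference.
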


\begin{proof}
At most $|w|$ applications of rules \eqref{eq:z} reduces $w$ to $0$.
\end{proof}

\begin{lemma}
\label{lem:distacac}
For every $k \geq 0$,
\[
d_{\rel{R}}\bigl(a^{2^{k+1}-1}ca^kc,0\bigr) \leq 2^{k+1}+k-1,
\]
which is less than the length of the word $a^{2^{k+1}-1}ca^kc$.
\end{lemma}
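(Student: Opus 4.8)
The plan is to establish the inequality by induction on $k$, essentially by rerunning the computation in the proof of \fullref{Lemma}{lem:equivalent} while counting relation-applications at each stage. The final clause of the statement is immediate: the word $a^{2^{k+1}-1}ca^kc$ has length $(2^{k+1}-1)+1+k+1 = 2^{k+1}+k+1$, which strictly exceeds $2^{k+1}+k-1$.

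For the base case $k=0$ the word is $a^{2^{1}-1}ca^{0}c = acc$, and one application of \eqref{eq:acc} rewrites it to $0$, so $d_{\rel{R}}(acc,0)\le 1 = 2^{1}+0-1$. For the inductive step, suppose $k\ge 1$ and that the bound holds for $k-1$, i.e.\ $d_{\rel{R}}(a^{2^{k}-1}ca^{k-1}c,0)\le 2^{k}+(k-1)-1 = 2^{k}+k-2$. Since $2^{k+1}-1 = 2(2^{k}-1)+1$, one can view $a^{2^{k+1}-1}ca^{k}c$ as $a^{2^{k+1}-2}\cdot aca\cdot a^{k-1}c$, and a single application of \eqref{eq:bc} (read from right to left, which is legitimate since $d_{\rel{R}}$ is defined via the symmetric Thue congruence $\thue_{\rel{R}}$) turns it into $a^{2^{k+1}-2}bca^{k-1}c$. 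Next, since $a^{2}b \thue_{\rel{R}} ba$, exactly $2^{k}-1 = (2^{k+1}-2)/2$ applications of \eqref{eq:ba} migrate the $b$ leftward through the $a$-block, transforming $a^{2(2^{k}-1)}bca^{k-1}c$ into $ba^{2^{k}-1}ca^{k-1}c$. By the inductive hypothesis together with the observation (noted in the Preliminaries) that $d_{\rel{R}}(puq,pvq)\le d_{\rel{R}}(u,v)$, applied with $p=b$ and $q=\emptyword$, at most $2^{k}+k-2$ further applications rewrite $ba^{2^{k}-1}ca^{k-1}c$ to $b0$, and a final application of \eqref{eq:z} yields $0$.

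Summing the four stages gives $1+(2^{k}-1)+(2^{k}+k-2)+1 = 2^{k+1}+k-1$, which is precisely the desired bound for $k$, completing the induction. The only point requiring any care is the exponent bookkeeping — in particular, confirming that the $b$-migration uses exactly $2^{k}-1$ applications of \eqref{eq:ba} and that this dovetails with the inductive hypothesis — so I do not anticipate a substantive obstacle.
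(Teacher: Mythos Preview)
Your proof is correct and follows essentially the same inductive argument as the paper: the same base case and the same four-stage decomposition (one reverse application of \eqref{eq:bc}, then $2^{k}-1$ applications of \eqref{eq:ba}, the inductive hypothesis applied with the prefix $b$, and one application of \eqref{eq:z}), yielding the identical sum $1+(2^{k}-1)+(2^{k}+k-2)+1=2^{k+1}+k-1$. Your additional remarks about the symmetry of $\thue_{\rel{R}}$ and the inequality $d_{\rel{R}}(puq,pvq)\le d_{\rel{R}}(u,v)$ make explicit points the paper leaves implicit, but the approach is the same.
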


\begin{proof}
Proceed by induction on $k$. For $k = 0$, $a^{2^{0+1}-1}aca^0c = ac^2
\imred_{\rel{S}} 0$, so $d_{\rel{R}}(a^{2^{0+1}-1}aca^0c,0) = 1 = 2^{0+1}+0-1$.

Now suppose the result holds for $k-1$; the aim is to show it
holds for $k$. Then:
\begin{align*}
d_{\rel{R}}\bigl(a^{2^{k+1}-1}ca^kc,a^{2^{k+1}-2}bca^{k-1}c\bigr) &\leq 1 && \text{(by \eqref{eq:bc})} \\
d_{\rel{R}}\bigl(a^{2^{k+1}-2}bca^{k-1}c,ba^{2^{k}-1}ca^{k-1}c\bigr) &\leq 2^k-1 && \text{(by \eqref{eq:ba})} \\
d_{\rel{R}}\bigl(ba^{2^{k}-1}ca^{k-1}c,b0\bigr) &\leq 2^k +k-2 && \text{(by induction)} \\
d_{\rel{R}}(b0,0) &\leq 1 && \text{(by \eqref{eq:z}).}
\end{align*}
Hence
\[
d_{\rel{R}}\Bigl(a^{2^{k+1}-1}ca^kc,0\Bigr) \leq 1 + (2^k - 1) + (2^{k} + k - 2) +1 = 2^{k+1} -1 + k.
\]
Therefore, by induction, the result holds for all $k$.
\end{proof}

Notice that a word $ba^{d_k}\cdots ba^{d_1}c$ (where $k,d_i \in \nset
\cup \{0\}$) can be reduced by iteratively moving the rightmost letter
$b$ to the right using rule \eqref{eq:ba} and then, when it is next to
the letter $c$, removing it by rule \eqref{eq:bc}. Each application of
\eqref{eq:bc} produces a single letter $a$ to the right of the letter
$c$. That is,
\[
ba^{d_k}\cdots ba^{d_1}c \red_{\rel{S}} a^{f(d_k,\ldots,d_1)}ca^k,
\]
for some function $f$. [Notice that $f$ is well-defined because
  $(A,\rel{S})$ is confluent and $a^{f(d_k,\ldots,d_1)}ca^k$ is
  irreducible with respect to $\rel{S}$.]

\begin{lemma}
\label{lem:f}
The function $f$ is defined on tuples of non-negative integers by
\[
f(d_k,\ldots,d_1)=2d_{k}+2^2d_{k-1}+\ldots+2^{k}d_1+2^{k}-1.
\]
\end{lemma}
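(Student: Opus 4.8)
The plan is to prove the closed form for $f$ by induction on $k$, tracking exactly what happens when the rightmost $b$ is moved right past a block of $a$'s using \eqref{eq:ba} and then annihilated against the $c$ using \eqref{eq:bc}.

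First I would establish the base case $k = 0$: the word is just $c$ (no $b$'s, no $a$'s between them), which is already irreducible, so $f() = 2^0 - 1 = 0$, matching the formula. For the inductive step, suppose the formula holds for all tuples of length $k-1$, and consider $w = ba^{d_k}\cdots ba^{d_1}c$. The reduction strategy described before the lemma moves the rightmost $b$ — the one in the block $ba^{d_1}c$ — rightwards. Applying \eqref{eq:ba} repeatedly, $ba^{d_1} \red_{\rel{S}} a^{2d_1}b$, so $w \red_{\rel{S}} ba^{d_k}\cdots ba^{d_2}a^{2d_1}bc$; then \eqref{eq:bc} gives $bc \imred aca$, producing $ba^{d_k}\cdots ba^{d_2}a^{2d_1}aca = ba^{d_k}\cdots ba^{d_2 + 2d_1 + 1}ca$.

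Now the subtlety: the trailing $a$ must be carried along. I would argue that $ba^{d_k}\cdots ba^{d_2'}ca \red_{\rel{S}} a^{f(d_k,\ldots,d_3,d_2')}ca^{k-1}a = a^{f(d_k,\ldots,d_2')}ca^k$, because the single $a$ sitting to the right of the final $c$ is never touched by any rule (it forms no left-hand side prefix, and \eqref{eq:bc} only ever produces $a$'s to the right of $c$, never consumes them). Thus with $d_2' = d_2 + 2d_1 + 1$, the inductive hypothesis on the length-$(k-1)$ tuple $(d_k,\ldots,d_3,d_2')$ yields
\[
f(d_k,\ldots,d_1) = f(d_k,\ldots,d_3,d_2+2d_1+1).
\]
Expanding the right-hand side via the inductive formula gives $2d_k + \cdots + 2^{k-2}d_3 + 2^{k-1}(d_2 + 2d_1 + 1) + 2^{k-1} - 1 = 2d_k + \cdots + 2^{k-1}d_2 + 2^k d_1 + 2^{k-1} + 2^{k-1} - 1 = 2d_k + \cdots + 2^k d_1 + 2^k - 1$, exactly the claimed expression for length $k$.

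The main obstacle I anticipate is making the bookkeeping around the trailing $a$'s rigorous: one needs to confirm that reducing $ba^{d_k}\cdots ba^{e}ca^j$ (with an arbitrary suffix $a^j$ already present) produces $a^{f(d_k,\ldots,e)}ca^{k-1+j}$, i.e.\ that the accumulated $a$'s to the right of $c$ play no role in the reduction of the remaining $b$-blocks. This is where invoking confluence of $(A,\rel{S})$ (\fullref{Lemma}{lem:complete}) is essential — it guarantees $f$ is well-defined regardless of reduction order, so it suffices to exhibit \emph{one} reduction sequence realizing the recursion, and the left-to-right $a$-block structure is preserved throughout since none of \eqref{eq:ba}, \eqref{eq:bc} ever creates a $0$ or disturbs letters to the right of the unique final $c$.
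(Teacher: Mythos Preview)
Your argument is correct, but it proceeds differently from the paper's. The paper peels off the \emph{leftmost} $b$: it first reduces the inner subword $ba^{d_{k-1}}\cdots ba^{d_1}c$ to $a^{f(d_{k-1},\ldots,d_1)}ca^{k-1}$ by the definition of $f$, then pushes the surviving outer $b$ across this block and applies \eqref{eq:bc}, obtaining the linear recursion $f(d_k,\ldots,d_1) = 2f(d_{k-1},\ldots,d_1) + 2d_k + 1$, which it then solves by a second induction. You instead peel off the \emph{rightmost} $b$ first, yielding $f(d_k,\ldots,d_1) = f(d_k,\ldots,d_3,d_2+2d_1+1)$; this mirrors the reduction strategy described immediately before the lemma and collapses everything into a single induction, at the cost of a less standard recursion shape. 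Both routes are sound; the paper's recursion is the textbook one to unwind, while yours tracks the actual rewriting more literally.

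One minor bookkeeping point: your recursion requires $k \geq 2$, since there must be a $d_2$ to absorb the term $2d_1+1$. For $k=1$ the single round already produces the irreducible word $a^{2d_1+1}ca$, giving $f(d_1) = 2d_1 + 1 = 2d_1 + 2^1 - 1$ directly; this should serve as your genuine base case. The $k=0$ case you state is correct as a formula but does not feed the induction at $k=1$.
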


\begin{proof}
First, we derive a recursive expression for $f(d_k,\ldots,d_1)$. Note
that $ba^{d_1}c \red_{\rel{S}} a^{2d_1}bc \imred_{\rel{S}}
a^{2d_1+1}ca$, so $f(d_1) = 2d_1+1$. By induction on $k$,
\begin{align*}
&ba^{d_k}ba^{d_{k-1}}\cdots ba^{d_1}c \\
\red_{\rel{S}}{}& ba^{d_k+f(d_{k-1},\ldots,d_1)}ca^{k-1} && \text{(by the definition of $f$)}\\
\red_{\rel{S}}{}& a^{2d_k+2f(d_{k-1},\ldots,d_1)}bca^{k-1} && \text{(by \eqref{eq:ba} applied $d_k+f(d_{k-1},\ldots,d_1)$ times)}\\
\imred_{\rel{S}}{}& a^{2d_k+2f(d_{k-1},\ldots,d_1)+1}ca^{k} &&\text{(by \eqref{eq:bc});} 
\end{align*}
thus $f(d_k,\ldots,d_1)=2f(d_{k-1},\ldots,d_1)+2d_k+1$. Hence, by induction on $k$,
\begin{align*}
f(d_k,\ldots,d_1) &= 2f(d_{k-1},\ldots,d_1)+2d_k+1 \\
&= 2(2d_{k-1}+2^2d_{k-2}+\ldots+2^{k-1}d_1+2^{k-1}-1) + 2d_k+1 \\
&= 2^2d_{k-1}+2^2d_{k-2}+\ldots+2^{k}d_1+2^{k}-2 + 2d_k+1 \\
&= 2d_k+2^2d_{k-1}+2^2d_{k-2}+\ldots+2^{k}d_1+2^{k}-1. \qedhere
\end{align*}
\end{proof}

\begin{proposition}
\label{prop:ldf}
The monoid $M$ has linear Dehn function.
\end{proposition}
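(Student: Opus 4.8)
The plan is to work with the complete system $(A,\rel{S})$ from \fullref{Lemma}{lem:complete}, writing $\overline w$ for the $\rel{S}$-normal form of a word $w$; since $\rel{S}$ is equivalent to $\rel{R}$ (\fullref{Lemma}{lem:equivalent}) we have $u \thue_{\rel{R}} v$ if and only if $\overline u = \overline v$. The first observation is that when $\overline w \neq 0$, the reduction $w \red_{\rel{S}} \overline w$ cannot use \eqref{eq:z} or \eqref{eq:acac}: both of these would introduce the symbol $0$, and any word containing $0$ reduces only to $0$. Hence this reduction uses only rules \eqref{eq:ba} and \eqref{eq:bc}, both of which lie in $\rel{R}$ and increase the length of a word by exactly $1$; consequently $d_{\rel{R}}(w,\overline w) = |\overline w| - |w|$ whenever $\overline w \neq 0$.

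The main step is to produce, for each word $w$, a word $\sigma(w)$ representing the same element of $M$ as $w$, such that (i) $|\sigma(w)| = O(|w|)$, (ii) $d_{\rel{R}}(w,\sigma(w)) = O(|w|)$, and (iii) $\sigma(w)$ depends only on the element of $M$ represented by $w$. Granting this, if $u \thue_{\rel{R}} v$ with $|u|+|v| \le n$ then $\sigma(u) = \sigma(v)$, so
\[
d_{\rel{R}}(u,v) \le d_{\rel{R}}(u,\sigma(u)) + d_{\rel{R}}(\sigma(v),v) = O(|u|) + O(|v|) = O(n),
\]
which is the required linear bound.

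To build $\sigma$ there are two cases. If $\overline w \neq 0$, take $\sigma(w)$ to be a suitably chosen short representative of its $\thue_{\rel{R}}$-class (so a function of the element only); using the observation above together with the explicit description of where the letters $b$ end up under reduction (\fullref{Lemma}{lem:f}) one controls both $|\sigma(w)|$ and $d_{\rel{R}}(w,\sigma(w))$ linearly in $|w|$. If $\overline w = 0$, take $\sigma(w) = 0$; the point is then to show $d_{\rel{R}}(w,0) = O(|w|)$. If $w$ already contains a $0$ this is \fullref{Lemma}{lem:distz}. Otherwise one reduces $w$ by moving its $b$'s towards the $c$'s via \eqref{eq:ba} and \eqref{eq:bc}; since $\overline w = 0$ one must eventually expose a subword of the form $a^{2^{m+1}-1}ca^{m}c$, which by \fullref{Lemma}{lem:distacac} can be rewritten to $0$ in fewer steps than its own length, after which the remaining letters are absorbed into the $0$ by \eqref{eq:z} at unit cost each. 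The arithmetical shape of rules \eqref{eq:acac} (equivalently, of the function $f$ of \fullref{Lemma}{lem:f}) is exactly what makes this bookkeeping come out linear.

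The hard part is this last case. Naively, forcing an \eqref{eq:acac}-pattern to appear can be disastrously expensive: each time a letter $b$ is pushed across a block $a^{p}$ by \eqref{eq:ba}, that block becomes $a^{2p}$, so an ill-chosen order of reductions can inflate the word to length exponential in $|w|$, and hence require exponentially many steps. The crux is therefore to choose the order of reductions — and to invoke \fullref{Lemma}{lem:distacac} to collapse a pattern to $0$ before it has been fully straightened out, together with the fact that absorption into $0$ costs only $O(\text{length})$ — so that no intermediate word ever exceeds length $O(|w|)$. This keeps $d_{\rel{R}}(w,0)$ linear in $|w|$ and completes the proof.
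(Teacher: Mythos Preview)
Your overall shape (split into the zero and non-zero cases, and recognise that the danger is exponential blow-up when pushing $b$'s across $a$-blocks) matches the paper, but both cases have real gaps where the content of the argument is missing.

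For the non-zero case, the $\sigma$-strategy does not work as stated. There is no obvious short canonical representative: the $\rel{S}$-normal form of $b^{k}c$ is $a^{2^{k}-1}ca^{k}$, which is exponentially long, and (by \fullref{Lemma}{lem:f}) the same happens for the $\rel{U}$-normal form. Declaring $\sigma(w)$ to be ``the shortest representative'' makes the bound $d_{\rel{R}}(w,\sigma(w))=O(|w|)$ equivalent to the very statement you are trying to prove. The paper avoids this by never choosing a canonical form at all: it introduces the length-reducing system $\rel{T}=\{\eqref{eq:bc},\eqref{eq:aab}\}$ and its completion $\rel{U}$, proves that the quotient monoid $N$ is \emph{left-cancellative}, and then runs an induction on $|w_{1}|+|w_{2}|$ in which one strips a common prefix (via cancellativity) and compares the first $c$-blocks of the two words directly. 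That cancellation-plus-induction idea is the substance of this case and is absent from your outline.

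For the zero case, you correctly say the crux is to avoid intermediate words of superlinear length, but you do not say how. The paper's key move is to \emph{first} rewrite $w$ in the reverse direction, applying $a^{2}b\to ba$ (rule \eqref{eq:aab}) until no subword $a^{2}b$ remains; this costs at most $|w|$ steps and yields $w'$ with $|w'|\le|w|$. Only with this preprocessing do the arithmetic constraints (all $d_{i}\le 1$ to the left of the relevant $b$'s) force the parameter $p-m$ to vanish, which is what caps $|w''|\le 2|w|$ before an \eqref{eq:acac}-subword is exposed. Without the $a^{2}b$-elimination step the bounding argument fails; your sketch omits it.
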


\begin{proof}
It is necessary to show that there is a constant $C$ such that if $u,v
\in A^*$ are such that $u \thue_{\rel{R}} v$, then $d_{\rel{R}}(u,v) \leq
C(|u|+|v|)$. We consider two cases separately: $u \thue_{\rel{R}} v
\thue_{\rel{R}} 0$ and $u \thue_{\rel{R}} v \not\thue_{\rel{R}} 0$.

\medskip
\noindent\textit{First case.} Suppose $u \thue_{\rel{R}} v
\thue_{\rel{R}} 0$. In this case, it suffices to prove that there is a
constant $C$ such that $d_{\rel{R}}(w,0) < C|w|$ whenever $w \thue_{\rel{R}} 0$, for then
$d_{\rel{R}}(u,v) \leq d_{\rel{R}}(u,0) + d_{\rel{R}}(0,v) \leq C|u| + C|v| = C(|u| + |v|)$.

So let $w\in\{a,b,c,0\}^*$ be such that $w \thue_{\rel{R}} 0$ in
$M$. If $0$ is present in $w$, then $d_{\rel{R}}(w,0) \leq |w|$ by
\fullref{Lemma}{lem:distz}. So assume without loss of generality that
$w\in\{a,b,c\}^*$.

Apply the reverse of rule \eqref{eq:ba} to $w$ as much as possible,
replacing all subwords $a^2b$ with $ba$, and resulting in a word
$w'$. Since each such application decreases the number of symbols $a$,
it can be applied at most $|w|$ times, so $d_{\rel{R}}(w,w') \leq
|w|$. Furthermore, this process always results in a shorter word, so
$|w'| \leq |w|$.

Now, $w' \red_{\rel{S}} 0$. Let us reduce $w'$ as follows: at each
step, proceed as follows:
\begin{enumerate}

\item If there is a subword of the form $a^{2^{k+1}-1}ca^kc$, apply
  the rule \eqref{eq:acac} with $n=k$, to get a word of the form
  containing a symbol $0$, then reduce to $0$ using rules \eqref{eq:z}
  and stop.

\item Otherwise, find the rightmost letter $b$ that lies somewhere to
  the left of some letter $c$. Shift it to the right by iteratively
  applying \eqref{eq:ba} until it is immediately to the left of the
  letter $c$, then remove it using \eqref{eq:ba}.

\end{enumerate}
Repeat this until the reduction process terminates with an irreducible
word. Since $(A,\rel{S})$ is complete, reduction must terminate
with $0$. The only way a symbol $0$ can be introduced is by
application of a rule \eqref{eq:acac}, as described in case~1
above. Therefore a subword $a^{2^{k+1}-1}ca^kc$ must appear at some
point:
\begin{equation}
\label{eq:wdash}
w' \red_{\rel{S}} \alpha a^{2^{k+1}-1}ca^kc\beta \red_{\rel{S}}  0.
\end{equation}
By our choice of reduction, only case~2 is used in the reduction $w'
\red_{\rel{S}} \alpha a^{2^{k+1}-1}ca^kc\beta$. This means that this
reduction only involves shifting letters $b$ to the right using
\eqref{eq:ba} and removing letters $b$ using \eqref{eq:bc}.

Now, it is possible that $w'$ already contains a subword of the form
$a^{2^{k+1}-1}ca^kc$. Let us eliminate this case before continuing. Suppose $w' = pa^{2^{k+1}-1}ca^kcq$ for some $p,q \in
\{a,b,c\}^*$. Then
\begin{align*}
d_{\rel{R}}(w,0) &\leq d_{\rel{R}}(w,pa^{2^{n+1}-1}ca^ncq) + d_{\rel{R}}(pa^{2^{n+1}-1}ca^ncq,p0q) + d_{\rel{R}}(p0q,0) \\
&\leq |w| + |2^{n+1}-1+n| + |p|+|q|+1 \qquad\qquad \text{(by \fullref{Lemmata}{lem:distacac} \& \ref{lem:distz})} \\
&\leq |w| + |pa^{2^{n+1}-1}ca^ncq| \\
&\leq 2|w|. 
\end{align*}

So assume that $w'$ contains no such subword. Then in order for the
subword $a^{2^{n+1}-1}ca^nc$ to appear in the reduction
\eqref{eq:wdash}, the following must hold with respect to the two
distinguished letters $c$ that eventually lie in that subword: to the
left of at least one of these letters $c$ there must be some letters
$b$ that are shifted to the right using \eqref{eq:ba} and then
removing them using \eqref{eq:bc}, with the $c$ involved in this rule
being the distinguished one. Formally, there exist
$\mu_1,\mu_2\in\{a,b\}^*$ and $\alpha_1,\beta_1 \in \{a,b,c\}^*$, and
$p$ with $0\leq p\leq n$, such that
\[
w'=\alpha_1c\mu_1\mu_2c\beta_1;\qquad \alpha_1c\mu_1 \red_{\rel{S}}
\alpha a^{2^{n+1}-1}ca^p;\qquad \mu_2c\beta_1 \red_{\rel{S}}
a^{n-p}c\beta.
\]
Since no letters $b$ could be removed from $\mu_1$, it follows that
$\mu_1=a^m$ for some $m\leq p$.

Consider the reduction $\mu_2c\beta_1 \red_{\rel{S}}
a^{n-p}c\beta$. This process must remove all the letters $b$ from
$\mu_2$. There may also be some reduction involving letters from
$\beta_1$ or words to which $\beta_1$ can be reduced. But since our
rewriting system is complete, we may choose to remove all the letters
$b$ from $\mu_2$ first, and only after that to any remaining reduction
$\mu_2c\beta_1$ to $a^{n-p}c\beta$. Formally, this can be expressed as
follows. Let $l=|\mu_2|_b$. Then $l \leq n-p$ since a letter $a$ is
produced to the left of the letter $c$ whenever \eqref{eq:bc} is
applied. Then there exists $s\geq l$ such that $\mu_2c \red_{\rel{S}}
a^sca^l$ and $a^l\beta_1 \red_{\rel{S}} \beta$. For future use, let
$t',e_l,\ldots,e_1\geq 0$ be such that $\mu_2=a^{t'}ba^{e_l}\cdots
ba^{e_1}$.

Now consider the reduction $\alpha_1ca^m \red_{\rel{S}} \alpha
a^{2^{n+1}-1}ca^p$. In this reduction it is only possible to shift
letters $b$ to the right using \eqref{eq:ba} and to remove them using
\eqref{eq:bc}. To produce the correct number of letters $a$ to the
right of the distinguished letter $c$, \eqref{eq:bc} has to be applied
$p-m$ times. (We will prove later that $p-m = 0$.) The $p-m$ letters
$b$ involved in this process have to be the $p-m$ letters $b$
of the word $w'$ that are closest (to the left) to the distinguished letter
$c$. Formally, this means that there exist $t,d_{p-m},\ldots,d_1\geq
0$ such that $\alpha_1=\alpha_2 a^tba^{d_{p-m}}\cdots
ba^{d_1}ca^m$ and
\begin{equation*}
a^tba^{d_{p-m}}\cdots ba^{d_1}ca^m \red_{\rel{S}} a^{2^{n+1}-1}ca^p.
\end{equation*}

Let us summarize the information we collected so far; brackets have been added for clarity:
\begin{align}
w' ={}& \alpha_2\bigl(a^tba^{d_{p-m}}\cdots ba^{d_1}ca^m\bigr)\bigl(
a^{t'}ba^{e_l}\cdots ba^{e_1}c\bigr)\beta_1; \nonumber\\
a^tba^{d_{p-m}}\cdots ba^{d_1}ca^m \red_{\rel{S}}{} & a^{2^{n+1}-1}ca^p; \label{eq:reducetox}\\
a^{t'}ba^{e_l}\cdots ba^{e_1}c \red_{\rel{S}}{} & a^{n-p}ca^l. \label{eq:reducetoy}
\end{align}
Hence
\[
w' \red_{\rel{S}} \alpha_2\bigl(a^{2^{n+1}-1}ca^p\bigr)\bigl(a^{n-p}ca^l\bigr)\beta_1=\alpha_2\bigl(a^{2^{n+1}-1}ca^nc\bigr)\cdot
a^l\beta_1.
\]
Thus we can assume without loss of generality that $\beta=a^l\beta_1$
and $\alpha=\alpha_2$. By the definition of the function $f$, applied
to the reductions \eqref{eq:reducetox} and \eqref{eq:reducetoy},
\begin{align}
2^{n+1}-1 &= t+f(d_{p-m},\ldots,d_1), \label{eq:fappliedtox}\\
n -p &= t'+f(e_l,\ldots,e_1). \label{eq:fappliedtoy}
\end{align}
Rearranging \eqref{eq:fappliedtoy}, substituting in
\eqref{eq:fappliedtox}, and applying \fullref{Lemma}{lem:f} gives
\begin{equation}
\label{eq:derivation1}
2^{p+t'+f(\vec{e})+1}=t+2d_{p-m}+\ldots+2^{p-m}d_1+2^{p-m},
\end{equation}
where $f(\vec{e})$ is an abbreviation for $f(e_l,\ldots,e_1)$.


Now our aim is to prove that $p-m = 0$. Assume the converse, that
$p-m\geq 1$. Since $w'$ does not contain subwords $a^2b$ (recall that
we removed all such subwords in obtaining $w'$ from $w$), it follows that
\[
t\leq 1, d_{p-m}\leq 1, \ldots, d_2\leq 1.
\]
Then~\eqref{eq:derivation1} gives us
\begin{align*}
2^{p+t'+f(\vec{e})+1} &= t+2d_{p-m}+\ldots+2^{p-m}d_1+2^{p-m}\\
&\leq 1+2+\ldots+2^{p-m-1}+2^{p-m}+2^{p-m}d_1\\
&= 2^{p+1-m}+2^{p-m}d_1-1.
\end{align*}
Multiply both sides of this inequality by $2^{m-p}$ to get
\begin{equation}
\label{eq:contra1}
2^{m+t'+f(\vec{e})+1}\leq d_1+1.
\end{equation}
Now, by our chosen method of performing the
reduction~\eqref{eq:wdash}, we have that
\[
\alpha_1a^tba^{d_{p-m}}\cdots ba^{d_1}ca^m\cdot a^{t'+f(\vec{e})}ca^l\beta.
\]
is one of the intermediate words we obtain in the process of
execution. Furthermore, since we have focussed on the first subword of
the form $a^{2^{k+1}-1}ca^kc$ that appears, this intermediate word
cannot contain such a subword, and hence
\begin{equation*}
d_1<2^{m+t'+f(\vec{e})+1}-1.
\end{equation*}
This contradicts~\eqref{eq:contra1} and so $p-m=0$.

Let us pause to summarise again what we have obtained thus far:
\begin{align*}
w'={}& \alpha_2\bigl(a^tca^m\bigr)\bigl(a^{t'}ba^{e_l}\cdots ba^{e_1}c\bigr)\beta_1;\\
w' \red_{\rel{S}}{}& \alpha_2a^{t}ca^{m+t'+f(\vec{e})}ca^l\beta_1;\\
t+1 =& 2^{m+t'+f(\vec{e})+1}.
\end{align*}
Let $w'' = \alpha_2a^{t}ca^{m+t'+f(\vec{e})}ca^l\beta_1$; then $w' \red_{\rel{S}} w''$. Since
\begin{equation*}
|w'|=|\alpha_2|+t+m+t'+l+2+e_l+\ldots+e_1+|\beta_1|,
\end{equation*}
we have that
\begin{align*}
|w''| &= |\alpha_2|+t+m+t'+l+2+f(e_l,\ldots,e_1)+|\beta_1|\\
&\leq |w'|+f(e_l,\ldots,e_1)\\
&\leq |w|+2^{f(e_l,\ldots,e_1)}-1\\
&\leq |w|+t\\
&\leq |w|+|w'|\\
&\leq 2|w|.
\end{align*}
Now, by \fullref{Lemma}{lem:distacac},
\[
d_{\rel{R}}\bigl(w'',\alpha_20\beta_1\bigr) = d_{\rel{R}}\bigl(a^{t}ca^{m+t'+f(\vec{e})}ca^l,0\bigr) \leq \bigl|a^{t}ca^{m+t'+f(\vec{e})}ca^l\bigr| \leq |p| \leq 2|w|.
\]
Furthermore, by \fullref{Lemma}{lem:distz}, $d_{\rel{R}}(\alpha_20\beta_1,0)
\leq |\alpha_20\beta_1| \leq |w''| \leq 2|w|$.  Hence $d_{\rel{R}}(w'',0) \leq
4|w|$. Since every application of the rules \eqref{eq:ba} and
\eqref{eq:bc} increases the length of the word by $1$, at most $2|w|$
rewriting rules were applied in the derivation $w' \red_{\rel{S}}
w''$. Finally, the word $w'$ was obtained from $w$ by applying at most $|w|$
changes of $a^2b$ by $ba$. Therefore,
\[
d_{\rel{R}}(w,0) \leq d_{\rel{R}}(w,w') + d_{\rel{R}}(w',w'') + d_{\rel{R}}(w'',0) \leq |w| + 2|w| + 4|w| = 6|w|.
\]

Therefore, including the case where $w$ contains a symbol $0$, which
we considered previously, $d_{\rel{R}}(w,0) \leq 6|w|$.

\medskip
\noindent\textit{Second case.} Now let $w,w'\in\{a,b,c,0\}^*$ be such
that $w \thue w' \not\thue 0$. Clearly neither $w$ nor $w'$ can
contain the letter $0$. Then only the rules \eqref{eq:ba} and
\eqref{eq:bc} can be used in obtaining $w'$ from $w$. That is, $w
\thue_{\rel{T}} w'$, where $\rel{T}$ consists of the rule \eqref{eq:bc} and the rule
\begin{equation}
a^2b \to ba; \tag{AAB}\label{eq:aab}
\end{equation}
(Notice that rule \eqref{eq:aab} is rule \eqref{eq:ba} reversed.) It
will therefore suffice to prove that the Dehn function of the monoid
$N$ presented by $\pres{a,b,c}{\rel{T}}$ is linear.

Let $\rel{U}$ consist of $\rel{T}$ together with
\begin{equation}
ba^nc \to a^{2n+1}ca \qquad\text{for $n \geq 0$}.\tag{BAC}\label{eq:bac}
\end{equation}

\begin{lemma}
The rewriting systems $(\{a,b,c\},\rel{T})$ and $(\{a,b,c\},\rel{U})$
are equivalent (that is, $\thue_{\rel{T}}$ and $\thue_{\rel{U}}$
coincide) and hence $\pres{A}{\rel{U}}$ also presents the monoid $N$.
\end{lemma}

\begin{proof}
Since $\rel{T} \subseteq \rel{U}$, and the only rules in $\rel{U}$
that do not appear in $\rel{T}$ are rules \eqref{eq:bac} for $n \geq
1$, it is sufficent to observe that for any $n \geq 1$,
\begin{align*}
ba^nc \thue_{\rel{T}}{}& a^{2n}bc && \text{(by \eqref{eq:ba} applied $n$ times)} \\
\imthue_{\rel{T}}{}& a^{2n+1}ca  && \text{(by \eqref{eq:bc}).} \qedhere
\end{align*}
\end{proof}

\begin{lemma}
The rewriting system $(\{a,b,c\},\rel{U})$ is complete.
\end{lemma}

\begin{proof}
Rules \eqref{eq:bac} reduce the number of letters $b$ in a word; rules
\eqref{eq:aab} shorten a word. Hence there cannot be an infinite
sequence of applications of these rewriting rules. Hence
$(\{a,b,c\},\rel{U})$ is noetherian.

The only possible overlap of left-hand sides of rules is an overlap of
\eqref{eq:aab} with some rule \eqref{eq:bac} in a word $a^2ba^nc$:
\[
a^2ba^nc \imred_{\rel{U}} ba^{n+1}c\quad\text{ and }\quad a^2ba^nc \imred_{\rel{U}} a^{2n+3}ca;
\]
but $ba^{n+1}c \imred_{\rel{U}} a^{2n+3}ca$ by \eqref{eq:bac}. Hence
$(\{a,b,c\},\rel{U})$ is confluent.
\end{proof}

\begin{lemma}
The monoid $N$ is left-cancellative: $xu \thue_{\rel{T}} xv \implies u
\thue_{\rel{T}} v$ for all $x \in A$ and $u,v \in A^*$.
\end{lemma}

\begin{proof}
Let $u,v \in A^*$ and $x \in A$. Without loss of generality assume
that $u$ and $v$ are irreducible with respect to $\rel{U}$. Suppose
that $xu \thue_{\rel{T}} xv$. Then $xu \red_{\rel{U}} w$ and $xv
\red_{\rel{U}} w$ for some irreducible word $w$. If both $xu$ and $xv$
are irreducible, then $xu = w = xv$. So suppose, without loss of
generality, that $xu$ is reducible. Then since $u$ is irreducible, the
first application of a rewriting rule to $xu$ must include the
leftmost letter, $x$. Consider the three possibilities for $x$ in
turn:
\begin{enumerate}

\item $x = a$. Then the first rule applied must be \eqref{eq:aab} and
  so $u=abu'$. Notice that $u$ cannot contain a letter $c$, for then
  $u$ would contain $ba^kc$ as a subword for some $k \geq 0$. Let $m$
  be maximal (possibly $m=0$) such that $u' = (ab)^mu''$. If $u''$
  begins with $a$, then $u'' = a^l$ for some $l$, since $u''$ cannot
  begin $ab$ (by the maximality of $m$) or contain $a^2b$ as a
  subword. So $u = (ab)^{m+1}u''$. Thus $au = a(ab)^{m+1}u''
  \red_{\rel{U}} b^{m+1}au''$. Now either $b^{m+1}au'' =
  b^{m+1}a^{l+1}$ or $b^{m+1}au'' = b^{m+1}abu'''$, where $u'' =
  bu'''$. In either case, $b^{m+1}au''$ is irreducible. Since $av
  \red_{\rel{U}} b^{m+1}au''$, the word $av$ cannot be
  irreducible. Parallel reasoning then shows that $v = (ab)^{n+1}v''$
  and $av \red_{\rel{U}} b^{n+1}av''$, which is irreducible. Thus
  $b^{m+1}au'' = b^{n+1}av''$ and so $m=n$ and $u'' = v''$. Hence
  $u=v$.

\item $x = b$. Then the first rule applied must be \eqref{eq:bac} and
  $u = a^kcu'$ for some $k$. Then $bu = ba^kcu' \red_{\rel{U}}
  a^{2k+1}cau'$. No further rewriting can affect the prefix up to and
  including the letter $c$. So $bv$ must also rewrite to a word
  beginning with $a^{2k+1}c$. Hence $v = a^kcav'$, and $bv
  \red_{\rel{U}} a^{2k+1}cav'$. Since rewriting canot affect the prefix
    up to and including $c$, it follows that $au'$ and $av'$ must
    rewrite to the same irreducible word. Hence, since $a$
    left-cancels by part~1 above, $u' = v'$. Therefore $u=v$.

\item $x=c$. Since no rewriting rule in $\rel{U}$ has left-hand side
  beginning with $c$, the words $cu$ and $cv$ are irreducible and so
  $cu = cv$ and hence $u=v$. \qedhere

\end{enumerate}
\end{proof}

We aim to show by complete induction on $\ell=|w_1|+|w_2|$ that if
$w_1 \thue_{\rel{T}} w_2$, then $d_{\rel{T}}(w_1,w_2)\leq\ell$. The
base case of $\ell = 0$ is obvious.

Now we do the induction step. Let $w_1,w_2\in\{a,b,c\}^*$ be such that
$w_1 \thue_{\rel{T}} w_2$. By at most $\ell$ applications of the rule
\eqref{eq:aab} we perform the reductions $w_1 \red_{\rel{U}} u_1$ and
$w_2 \red_{\rel{U}} u_2$, where the words $u_1$ and $u_2$ do not
contain any subword $a^2b$. Since each application of \eqref{eq:aab}
decreases the length of the word by $1$, it follows that
$d_{\rel{T}}(w_1,u_1) \leq |w_1| - |u_1|$ and $d_{\rel{T}}(w_2,u_2)
\leq |w_2| - |u_2|$.

The number of letters $c$ in $u_1$ and $u_2$ coincide. If
there are no letters $c$ in $u_1$ or $u_2$, then $u_1$ and $u_2$ must
be irreducible and so $u_1=u_2$ (since $(A,\rel{U})$ is confluent) and
so $d_{\rel{T}}(w_1,w_2) \leq d_{\rel{T}}(w_1,u_1) +
d_{\rel{T}}(w_2,u_2) \leq |w_1| + |w_2| = \ell$. So assume that $u_1$
and $u_2$ contain at least one letter $c$. Then $u_1=p_1cq_1$ and
$u_2=p_2cq_2$ where $p_1,p_2\in\{a,b\}^*$. One easily sees that
$p_1=\emptyword$ if and only if $p_2=\emptyword$. In the case where
$p_1=p_2=\emptyword$ we have $cq_1 \thue_{\rel{T}} cq_2$, and hence
$q_1 \thue_{\rel{T}} q_2$ by left-cancellativity and then by induction
\begin{align*}
d_{\rel{T}}(w_1,w_2) &\leq d_{\rel{T}}(w_1,u_1) + d_{\rel{T}}(w_2,u_2)
+ d_{\rel{T}}(q_1,q_2) \\
&\leq |w_1| - |u_1| + |w_2| - |u_2| +|q_1|+|q_2|\\
&\leq |w_1| + |w_2| = \ell.
\end{align*}
Therefore we can assume that $p_1$ and $p_2$ are non-empty.  Again, by
left-cancellativity, we may assume that $p_1$ and $p_2$ start with
different letters. Without loss of generality, suppose $p_1=ap_1'$ and
$p_2=bp_2'$.

If $p_1'$ contains letters $b$, then since $p_1$ does not contain a
subword $a^2b$, we obtain that $p_1'=bp_1''$, where $p_1'' \in
\{a,b\}^*$. But then in reducing $p_1c = abp_1''c$ and $p_2c =
bp_2'c$ to normal form using $\rel{U}$, the last applications of
\eqref{eq:bac} to each word, both of which involve the distinguished letters $b$, produce
an odd number of letters $a$ to the left of $c$, and there is already
an extra letter $a$ in $p_1c$. Thus $p_1c=abp_1''c \red_{\rel{U}}
a^{2r}ca^d$ and $p_2c=bp_2'c \red_{\rel{U}} a^{2s+1}ca^e$, which is a
contradiction. Thus $p_1'$ cannot contain any letters $b$; hence
$p_1=a^g$.

Thus $p_1c = a^gc$ is irreducible. Recall that $p_2$ does not contain
a subword $ab$. So reducing $p_2c$ to a normal form word only
involves rules \eqref{eq:bac}, since applying such a rule cannot
create a subword $ab$ to the left of the letter $c$ and only
introduces letters $a$ to the right of the letter $c$. That is, $p_2c
\red_{\rel{U}} a^hca^k$, where $k$ is the number of letters $b$ in
$p_2$, or equivalently the number of applications of rules
\eqref{eq:bac}. Since $a^hca^k$ is irreducible, $g=h$, an since each
application of a rule \eqref{eq:bac} produces at least one letter $a$
to the left of $c$, $h\geq k$; hence $|p_1| > k$. Furthermore, $q_1
\thue_{\rel{T}} a^kq_2$.

Finally, notice that $|u_i| \geq |p_i| + |q_i| \geq k + |q_i|$ for $i=1,2$. Therefore
\begin{align*}
d_{\rel{T}}(w_1,w_2) &\leq d_{\rel{T}}(w_1,u_1) + d_{\rel{T}}(w_2,u_2)
+ d_{\rel{T}}(u_1,u_2) \\
&\leq |w_1| - |u_1| + |w_2| - |u_2| + d_{\rel{T}}(p_2c,a^hca^k) + d_{\rel{T}}(a^kq_2,q_1) \\
&\leq |w_1| - |u_1| + |w_2| - |u_2| + k + k + |q_1|+ |q_2| \\
&\leq |w_1| + |w_2| = \ell
\end{align*}

This finishes the induction step proof and we conclude that $N$ has
linear Dehn function. Hence if $w,w'\in\{a,b,c,0\}^*$ are such that $w
\thue w' \not\thue 0$, then $d_{\rel{R}}(w,w') \leq |w_1| + |w_2|$.

\medskip
This completes the proof: $M$ has linear Dehn function.
\end{proof}

\begin{proposition}
\label{prop:noregcs}
The monoid $M$ does not have a regular cross-section.
\end{proposition}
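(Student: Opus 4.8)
The plan is to argue by contradiction, exploiting the exponential ``absorption threshold'' carried by the rules \eqref{eq:acac}. Assuming $M$ has a regular cross-section over some finite generating set, \fullref{Proposition}{prop:changegen} produces a regular cross-section $L$ of $M$ over $A=\{a,b,c,0\}$, and the aim is to contradict the regularity of $L$.

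First I would record the structural facts about the complete system $(A,\rel{S})$ that make the threshold visible. The $\rel{S}$-irreducible words are $0$ together with the words $ub^{k}$ where $u\in\{a,c\}^*$ contains no factor $a^{2^{n+1}-1}ca^{n}c$; in particular $a^{m}ca^{n}c$ is $\rel{S}$-irreducible exactly when $m<2^{n+1}-1$, and $a^{m}ca^{n}c\thue_{\rel{R}}0$ exactly when $m\geq 2^{n+1}-1$. Only rules \eqref{eq:ba} and \eqref{eq:bc} can appear in a $\thue_{\rel{R}}$-derivation between two words that are not $\thue_{\rel{R}}0$, and neither changes the number of letters $c$, so the number of letters $c$ is an invariant of each nonzero $\thue_{\rel{R}}$-class. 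Tracing how \eqref{eq:ba} shifts letters $b$ rightwards until \eqref{eq:bc} absorbs them into a $c$ — the computation behind \fullref{Lemma}{lem:f} — one shows: if $m<2^{n+1}-1$, then every word $\thue_{\rel{R}}$-equivalent to $a^{m}ca^{n}c$ has the form $\gamma\,c\,a^{s}c$ with $\gamma\in\{a,b\}^*$, where $\gamma$ reduces under \eqref{eq:ba} to $a^{g}b^{t}$ with $t=|\gamma|_{b}$, and then $g+2^{t}-1=m$ and $t+s=n$. A ``minimum number of powers of two'' estimate then shows that when $m=2^{n+1}-2$ any such $\gamma$ must contain at least $n$ occurrences of $a$, so $|\gamma|\geq n$.

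Next I would pass to the regular language $\mathcal{L}:=L\cap\{a,b\}^*ca^*c$. Since every word $\thue_{\rel{R}}$-equivalent to $a^{m}ca^{n}c$ (for $m<2^{n+1}-1$) already lies in $\{a,b\}^*ca^*c$, the language $\mathcal{L}$ contains exactly one word from each of these (pairwise distinct) classes, and at most the single word of $L$ that is $\thue_{\rel{R}}0$; in effect $\mathcal{L}$ is a regular cross-section for the family $\{a^{m}ca^{n}c : m<2^{n+1}-1\}$ of representatives.

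The heart of the proof, and the step I expect to be the main obstacle, is a pumping argument on $\mathcal{L}$. Let $p$ be a pumping length for $\mathcal{L}$, fix $n>p$, and let $w\in\mathcal{L}$ be the word representing the class of $a^{2^{n+1}-2}ca^{n}c$, so $w=\gamma\,c\,a^{s}c$ with $|\gamma|\geq n>p$. The pumping lemma, applied inside the prefix $\gamma$, yields $\gamma=\alpha u\delta$ with $1\leq|u|$, $|\alpha u|\leq p$, and $\alpha u^{i}\delta\,c\,a^{s}c\in\mathcal{L}\subseteq L$ for all $i\geq 0$. Using the description above, the class of $\alpha u^{i}\delta\,c\,a^{s}c$ is either $\thue_{\rel{R}}0$ or is represented by $a^{m(i)}ca^{n(i)}c$ for explicit $m(i),n(i)$; the point is that the starting value $2^{n+1}-2$ lies exactly one below the threshold $2^{n+1}-1$ and is stored inside the $\{a,b\}^*$-prefix $\gamma$, only a bounded initial segment of which the pump can disturb. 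When $u$ consists of letters $a$ only this immediately forces $m(i)$ past the threshold for all large $i$, so $\alpha u^{i}\delta\,c\,a^{s}c\thue_{\rel{R}}0$ for at least two distinct $i$; as these give two distinct words of the cross-section $L$ that are $\thue_{\rel{R}}0$, this is a contradiction. The delicate case is when $u$ contains letters $b$: pumping then also inflates the powers of two appearing in \eqref{eq:acac}, so one must argue more carefully — keeping track of the $b$-weights introduced by \eqref{eq:ba} and using the refined bound $|\gamma|\geq n$, and if necessary letting $n$ range over infinitely many values or relocating the pumped factor — to again produce two distinct words of $L$ that are $\thue_{\rel{R}}0$. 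This contradiction establishes the proposition.
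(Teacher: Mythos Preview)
Your overall strategy --- assuming a regular cross-section $L$, focusing on elements sitting one step below an \eqref{eq:acac} threshold, and pumping inside the $\{a,b\}^*$-prefix of their $L$-representatives --- matches the paper's. Your structural analysis of the two-$c$ elements is also correct: any representative of $a^{m}ca^{n}c$ (with $m<2^{n+1}-1$) has the form $\gamma\,c\,a^{s}c$ with $\gamma\in\{a,b\}^*$, and your bound $|\gamma|\geq n$ holds.

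However, the ``delicate case'' where $u$ contains a letter $b$ is a genuine obstruction, not merely a nuisance, and your two-$c$ setup is not enough to close it. Take $\gamma=(ab)^{n}$ with $s=0$: one checks that $(ab)^{n}cc$ represents $a^{2^{n+1}-2}ca^{n}c$. If the automaton's decomposition yields $u=(ab)^{q}$ (or $(ba)^{q}$), then every pumped word is $(ab)^{n+(i-1)q}cc$, which again represents $a^{2^{n'+1}-2}ca^{n'}c$ for $n'=n+(i-1)q$ --- all distinct and all nonzero. Pumping moves $m$ and the threshold $2^{n+1}-1$ in lockstep, so no contradiction arises. Your suggested fixes (letting $n$ vary, relocating the pump) do not obviously break this synchrony, and you give no concrete mechanism that would.

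The paper handles exactly this difficulty by passing to a \emph{three}-$c$ element
\[
a^{2^{2^{Q+1}-1}-2}\,c\,a^{2^{Q+1}-2}\,c\,a^{Q}\,c,
\]
with \emph{both} the first and the middle exponents one below their respective thresholds. Its representative has shape $\gamma_{1}\,c\,\gamma_{2}\,c\,a^{m}c$. One first argues, by pumping any long $a$-run in $\gamma_{1}$, that every $a$-run in $\gamma_{1}$ has length at most $N$; a counting estimate then forces $k=|\gamma_{1}|_{b}>N$, so $|\gamma_{1}|>N$ and there is a pumpable factor entirely inside $\gamma_{1}$. Now either outcome wins: if $u\in a^{*}$, the first exponent increases past its threshold; if $u$ contains a $b$, then $k$ increases, which increases the \emph{middle} exponent while $Q$ stays fixed, so the second threshold is crossed. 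The extra $c$ is precisely what decouples a threshold from the quantity being inflated by the pump. You should rework your argument using the three-$c$ element.
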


\begin{proof}
Suppose for \textit{reductio ad absurdum} that $M$ has a regular
cross-section. Then by \fullref{Proposition}{prop:changegen}, $M$
has a regular cross-section $L$ over $\{a,b,c\}$. Let $N$ be the
number of states in an automaton accepting $L$.

Choose $Q$ large enough to guarantee that
\[
(N+1)(2^{k+1}-1)\geq
2^{2^{Q+1}-1}-2 \implies k>N.
\]
Consider the word $u=a^{2^{2^{Q+1}-1}-2}ca^{2^{Q+1}-2}ca^Qc$. This
word is irreducible with respect to $\rel{S}$. Let $w$ be the unique
word in $L$ with $w \red_{\rel{S}} u$. Then since $u \not\red_{\rel{S}} 0$, the reduction $w
\red_{\rel{S}} u$ only involves rules \eqref{eq:ba} and \eqref{eq:bc}. Hence $w$
must be of the form
\[
a^tba^{d_k}\cdots ba^{d_1}ca^{t'}ba^{e_l}\cdots ba^{e_1}ca^mc,
\]
where
\begin{align*}
a^tba^{d_k}\cdots ba^{d_1}c & \red_{\rel{S}} a^{2^{2^{Q+1}-1}-2}ca^k,\\
a^{t'}ba^{e_l}\cdots ba^{e_1}ca^mc & \red_{\rel{S}} a^{2^{Q+1}-2-k}ca^Qc;
\end{align*}
hence
\begin{align*}
t + f(d_k,\ldots,d_1) &= 2^{2^{Q+1}-1}-2,\\
k+t'+ f(e_l,\ldots,e_1) &= 2^{Q+1}-2-k,\\
l+m &= Q.
\end{align*}
Assume that $d_i>N$ for some $i$. Then we can pump some power of $a$
in $a^{d_i}$ to obtain a sequence of words $w_n$ that also lie in $L$:
\[
w_n=a^tba^{d_k}\cdots ba^{d_{i+1}}ba^{d_i+qn}ba^{d_{i-1}}\cdots ba^{d_1}ca^{t'}ba^{e_l}\cdots ba^{e_1}ca^mc\in L
\]
for some $q \in \nset$. Since $f$ is strictly increasing in all of its
inputs, there is a constant $p$ such that $t +
f(d_k,\ldots,d_i+qn,\ldots,d_1) \geq 2^{2^{Q+1}-1}-1$ for all $n \geq
p$. Thus for all $n \geq p$, the word $w_n$ reduces to a word
containing $a^{2^{2^{Q+1}-1}-2}ca^{2^{Q+1}-2}c$, to which the rule
\eqref{eq:acac} applies with $n = 2^{Q+1}-2$, and so $w_n
\red_{\rel{S}} 0$. Hence there are infinitely many $w_n \in L$ with
$w_n \red_{\rel{S}} 0$; this contradicts $L$ being a cross-section.

Thus all $d_i\leq N$. Similarly $t\leq N$. Hence
\begin{align*}
2^{2^{Q+1}-1}-2 &= t+2d_k+\ldots+2^kd_1+2^k-1\\
&\leq N(1+2+\ldots+2^k)+2^k-1\\
&= N(2^{k+1}-1)+2^k-1\\
&\leq (N+1)(2^{k+1}-1),
\end{align*}
and so $k>N$ by the choice of $Q$. But then
\begin{equation*}
w \red_{\rel{S}} a^{2^{2^{Q+1}-1}-2}ca^ka^{t'}ba^{e_l}\cdots
ba^{e_1}ca^mc.
\end{equation*}
Then by pumping a power of $a$ within $a^k$ we see similarly get
infinitely many distinct words in $L$ that reduces to a words
containing $a^{2^{Q+1}-1}ca^Qc$ as a subword and hence to $0$, again
contradicting $L$ being a cross-section of $M$.
\end{proof}

\bibliography{\jobname,automaticsemigroups,languages,presentations,semigroups,c_publications}
\bibliographystyle{alphaabbrv}

\end{document}